\theoremstyle{definition}
\newtheorem{definition}{Definition}
\newtheorem{theorem}{Theorem}[section]
\title{The discrete logarithm problem in the group of non-singular circulant matrices}
\author{Ayan Mahalanobis}
\address{Indian Institute of Science Education and Research, Pashan, Pune
  411021 India}
\email{ayanm@iiserpune.ac.in}
\begin{document}
\today
\begin{abstract}
The discrete logarithm problem is one of the backbones in public key
cryptography. In this paper we study the discrete logarithm problem in
the group of circulant matrices over a finite field. This gives rise
to secure and fast public key cryptosystems.
\end{abstract}
\maketitle
\section{Introduction}
Menezes and Wu~\cite{menezes} claim that working with the discrete
logarithm problem in matrices offers
no major improvement from working with a finite field. Many authors,
including myself~\cite{ayan1}, repeated that claim. It is now a common
knowledge that for practical purposes, the discrete logarithm problem in
non-singular matrices is not worth looking at. 

In this note, I provide a counterexample to the above mentioned common
knowledge and show that matrices can be used effectively to produce a
fast and secure cryptosystem. This approach can be seen as working
with the MOR cryptosystem~\cite{ayan2}, with finite dimensional vector spaces over a
finite field.

In this note, we will only deal with the discrete logarithm problem
in matrices, i.e., given a non-singular $d\times d$ matrix $A$ and
$B=A^m$ over $\mathbb{F}_q$, compute $m$; where $q$ is a power of a
prime $p$. One can easily build any cryptosystem that uses
the discrete logarithm problem, like the Diffie-Hellman key exchange
or the ElGamal cryptosystem, using the discrete logarithm problem in matrices. There are many
aspects to the security of a cryptosystem. In this paper we will only
deal with the computational aspects of solving a discrete
logarithm problem.

The core of the Menezes-Wu algorithm is to compute the
characteristic polynomial $\chi_A(x)$ of $A$. The eigenvalues of $A$,
which are the roots of $\chi_A(x)$
belong to the splitting field of $\chi_A(x)$. The roots of $\chi_{B}(x)$
also belong to the same splitting field. Then to solve the discrete
logarithm problem, one has to solve the individual discrete logarithm
problems in the eigenvalues and then use the Chinese remainder
theorem. The security of the discrete logarithm problem depends on the
degree of the extension of the splitting field. Since solving a discrete logarithm
problem depends on the size of the field, we can get excellent
security by taking $d$ large (around 20) and choose $A$ such that
$\chi_A(x)$ is irreducible. However, in that case matrix
multiplication becomes very expensive and we are better off working
with the finite field $\mathbb{F}_{q^d}$. This is the argument of
Menezes and Wu~\cite{menezes}.

In this paper, we deal with a particular type of non-singular matrices --
the \emph{circulant matrices}. We show, that for these matrices, squaring is
free and multiplication is easy. When this is the case, the above
argument is no longer valid and we have a good chance of a successful
cryptosystem. Using the extended Euclidean algorithm, computing the
inverse of a circulant matrix is easy, that makes a cryptosystem built
on circulant matrices very fast and secure.

When working with the discrete logarithm problem in matrices, one
should be careful of the fact that the determinant of a matrix is a
multiplicative function to the ground field. This can always reduce the discrete
logarithm problem in matrices to a discrete logarithm problem in the
ground field. This can be easily avoided by:
\begin{itemize}
\item[(i)] Choose $A$ such that determinant of $A$ is $1$.
\end{itemize}
\section{Circulant Matrices}
The reader is reminded that all fields (often denoted by $F$) are
finite with characteristic $p$.
\begin{definition}
A $d\times d$ matrix over a field $F$ is called circulant, if every row except
the first row, is a right circular shift of the row above that. So a
circulant matrix is defined by its first row.  One can define a circulant 
matrix similarly using columns. 
\end{definition}
Even though a circulant matrix is a two dimensional object, in
practice it behaves much like an one dimensional object given by the
first row or the first column. We will denote
a circulant matrix $C$ with first row $c_0,c_1,\ldots,c_{d-1}$ by
$C=\text{circ}\left(c_0,c_1,c_2,\ldots,c_{d-1}\right)$.
An example of a circulant $5\times 5$ matrix is:
\[
\begin{pmatrix}
c_0&c_1&c_2&c_3&c_4\\
c_4&c_0&c_1&c_2&c_3\\
c_3&c_4&c_0&c_1&c_2\\
c_2&c_3&c_4&c_0&c_1\\
c_1&c_2&c_3&c_4&c_0
\end{pmatrix}
\]
It is easy to see that all the (sub)diagonals of a circulant matrix are
constant. This fact comes in handy. Let
$W=\text{circ}(0,1,0,\ldots,0)$ be a $d\times d$ circulant matrix,
then clearly $W^d=I$. We can
write $C=c_0I+c_1W+c_2W^2+\ldots+c_{d-1}W^{d-1}$. One can define a
\emph{representer polynomial} corresponding to the circulant matrix
$C$ as $\phi_C=c_0+c_1x+c_2x^2+\ldots+c_{d-1}x^{d-1}$. This shows
that the circulants form a commutative ring with respect to matrix
multiplication and matrix addition and is isomorphic to (the
isomorphism being matrix to representer polynomial)
$\dfrac{F[x]}{x^d-1}$. For more on circulant matrices, see \cite{davis}.
\subsection{How easy is it to square a circulant matrix?}\hspace{0cm}\\
Let $A=\text{circ}(a_0,a_1,\ldots,a_{d-1})$ be a circulant matrix over
a field of characteristic 2. We
show that to compute $A^2$, we need to compute $a_i^2$
for each $i$ in $\{0,1,2,\ldots,d-1\}$. Then
$A^2=\text{circ}\left(a^2_{\pi(0)},a^2_{\pi(1)}\ldots,a^2_{\pi(d-1)}\right)$,
  where $\pi$ is a permutation of $0,1,2,\ldots,(d-1)$. This was also
  observed by Silverman~\cite[Example 3]{silverman}.  
\begin{theorem}\label{theorem1}
If the characteristic of the field $F$ is 2, and $d$ is an odd
integer, then squaring a $d\times d$ circulant matrix $A$ is the same
as squaring $d$ field elements.
\end{theorem}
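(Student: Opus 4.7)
The plan is to translate the matrix squaring into a polynomial squaring via the isomorphism with $F[x]/(x^d-1)$ established in the previous paragraph, and then exploit the Frobenius endomorphism in characteristic $2$.

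First I would write $A = \text{circ}(a_0,a_1,\ldots,a_{d-1})$, and pass to the representer polynomial $\phi_A(x) = a_0 + a_1 x + \cdots + a_{d-1}x^{d-1}$. Since the map $C \mapsto \phi_C$ is a ring isomorphism, $\phi_{A^2}$ equals $\phi_A(x)^2$ reduced modulo $x^d - 1$. In characteristic $2$, the Frobenius map is a ring homomorphism, so all mixed terms in $\phi_A(x)^2$ vanish and
\[
\phi_A(x)^2 = \sum_{i=0}^{d-1} a_i^2 x^{2i}.
\]
Thus no field multiplications are required beyond the $d$ squarings $a_i \mapsto a_i^2$; the remaining work is only bookkeeping on exponents.

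Next I would reduce $\sum a_i^2 x^{2i}$ modulo $x^d - 1$, which amounts to replacing each exponent $2i$ by $2i \bmod d$. So the coefficient of $x^j$ in $\phi_{A^2}$ is $a_i^2$ where $i$ is the unique index with $2i \equiv j \pmod{d}$. To make this well-defined as a single $a$-coefficient (rather than a sum of several), I need the map $\sigma : i \mapsto 2i \bmod d$ to be a bijection of $\{0,1,\ldots,d-1\}$; equivalently, $2$ must be a unit modulo $d$. This is precisely the hypothesis that $d$ is odd. Setting $\pi = \sigma^{-1}$, the reduction shows $A^2 = \text{circ}(a_{\pi(0)}^2, a_{\pi(1)}^2, \ldots, a_{\pi(d-1)}^2)$.

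The only genuinely delicate point is ensuring the indexing in the final formula is consistent, that is, that one correctly distinguishes $\sigma$ from $\pi = \sigma^{-1}$ when reading off the $j$-th coefficient of $\phi_{A^2}$; everything else (the freshman's dream, the ring isomorphism, the invertibility of $2$ mod an odd $d$) is immediate. Since forming $A^2$ from the data $(a_0^2,\ldots,a_{d-1}^2)$ only involves permuting entries, the total cost of squaring $A$ is exactly $d$ squarings in $F$, as claimed.
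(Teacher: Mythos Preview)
Your proof is correct, but it takes a genuinely different route from the paper's. The paper argues directly at the level of matrix entries: the $(1,j{+}1)$ entry of $A^2$ is the dot product $\sum_i a_i\,a_{(j-i)\bmod d}$, and since the involution $i\mapsto j-i$ pairs the terms $a_i a_{j-i}$ and $a_{j-i}a_i$ (which cancel in characteristic~2), only the fixed point $2i\equiv j\pmod d$ survives, contributing $a_i^2$; oddness of $d$ then guarantees a unique such $i$ for each $j$. You instead pass through the ring isomorphism with $F[x]/(x^d-1)$ and invoke the Frobenius identity $(\sum a_i x^i)^2=\sum a_i^2 x^{2i}$, reducing exponents modulo $d$ and using the bijectivity of $i\mapsto 2i$ on $\mathbb{Z}/d\mathbb{Z}$. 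Your argument is cleaner and more conceptual, making the role of characteristic~2 and odd $d$ completely transparent; the paper's pairing argument is more elementary in that it avoids the polynomial isomorphism, but both land on exactly the same permutation $\pi(j)=2^{-1}j\bmod d$.
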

\begin{proof}
We use the standard method of matrix multiplication; where one
computes the dot product of the i\textsuperscript{th} row with the
j\textsuperscript{th} column for the element at the
intersection of the i\textsuperscript{th} row and the
j\textsuperscript{th} column of the product matrix. As we saw before
the circulant matrices are closed under multiplication and a circulant
matrix is given by its first row.

Taking these into account, if the circulant is
$A=\text{circ}\left(a_0,a_1,\ldots,a_{d-1}\right)$, we see that the
first element of the first row of the product, is the dot product of
$\left(a_0,a_1,\ldots,a_{d-1}\right)$ with the first column
$\left(a_0,a_{d-1},\ldots,a_1\right)^T$. The first column can be thought of
as the map $a_i\mapsto a_{-i \mod d}$ for $i=0,1,\ldots,(d-1)$.

For each $j$ in $\{0,1,2,\ldots,d-1\}$, the map is given by
$a_i\mapsto a_{j-i\mod d}$. Now notice that if $i\mapsto j-i\mod d$,
then $j-i\mapsto i\mod d$. This proves that there are pairs formed in
the dot product,
which makes it zero when working in characteristic 2.

The only thing that escapes forming pairs, are those $i$, for which $i=j-i \mod
d$. Since $d$ is odd, there is an inverse of $2 \mod d$ and 
an unique solution for $i$.
\end{proof}
It is easy to see from the above proof, that once a $d$ is fixed, one
can easily compute the permutation $\pi$. The
computation of $d$ different powers can be done in parallel.
\section{The discrete logarithm problem in the group of non-singular circulant matrices}
As we saw before, circulant matrices can be represented in two
different ways -- one as a circulant matrix and other as an element of
the ring $\mathcal{R}=\dfrac{F[x]}{x^d-1}$. In the later case, each
element of $\mathcal{R}$ is a polynomial of degree $d-1$ in $F$. The
polynomial multiplication in $\mathcal{R}$ can be done (in parallel) using matrix
multiplication. If matrix multiplication is used to do the
polynomial multiplication, then there is no need to do the
reduction mod $x^d-1$.

These two representations lead to two different kinds of attack to the
discrete logarithm problem:
\begin{itemize}
\item[(a)] The discrete logarithm problem in matrices.
\item[(b)] The discrete logarithm problem in $\mathcal{R}$.
\end{itemize}
\subsection{The discrete logarithm problem in matrices}
As we understood from Menezes and Wu~\cite{menezes}, solving the discrete logarithm problem in
non-singular matrices is tied to the largest degree of the irreducible
component of the characteristic polynomial. The best case scenario
happens when the characteristic polynomial is irreducible. For
circulant matrices this is not the case.

It is easy to see that the \emph{row-sum}, sum of all the elements in a
row, is constant in a circulant matrix. This makes the row-sum an
eigenvalue of the matrix. Since this eigenvalue belongs to the ground
field, the only way to escape a discrete logarithm problem in the ground
field is to make sure that the eigenvalue, i.e., the row-sum, is $1$. So the circulant matrix $A$ should be chosen with the following
properties:
\begin{itemize}
\item[(ii)] The matrix $A$ has row-sum $1$.
\item[(iii)] The polynomial $\dfrac{\chi_A}{x-1}$ is
  irreducible.
\end{itemize}
 In the above case the security of the
discrete logarithm problem in $A$ is similar to that of the discrete
logarithm problem in the finite field $\mathbb{F}_{q^{d-1}}$.
\subsection{The discrete logarithm problem in
  $\dfrac{\mathbb{F}_q[x]}{x^d-1}$} 
Notice that
\[\dfrac{\mathbb{F}_q[x]}{x^d-1}\cong\dfrac{\mathbb{F}_q[x]}{x-1}\times\dfrac{\mathbb{F}_q[x]}{\psi(x)},\] 
where $\psi(x)=\dfrac{x^d-1}{x-1}$ and
$\gcd(d,q)=1$. So the discrete logarithm problem in
$\dfrac{\mathbb{F}_q[x]}{x^d-1}$ reduces into two different discrete
logarithm problems, one in the field $\mathbb{F}_q$ and the other in
the ring $\dfrac{\mathbb{F}_q[x]}{\psi(x)}$. The matrix $A$ can be
chosen in such a way that the representer polynomial $\phi_A(x)\mod
(x-1)$ is either $0$ or $1$ and hence reveals no
information about the secret key $m$. If $\psi(x)$ is irreducible, then the
discrete logarithm problem is a discrete logarithm problem in the field
$\dfrac{\mathbb{F}_q[x]}{\psi (x)}$. Hence the security of the
discrete logarithm problem is the same as that of the discrete
logarithm problem in $\mathbb{F}_{q^{d-1}}$.

The question remains, when is $\psi(x)$ irreducible? We know
that~\cite[Theorem 2.45]{lidl},
$x^d-1=\prod\limits_{d_1|d}\Phi_{d_1}(x)$, where $\Phi_k(x)$ is the
$k$\textsuperscript{th} cyclotomic polynomial. It follows
that if $d$ is prime, then $\psi(x)=\Phi_d(x)$. Then the question
reduces to, when is the $d$\textsuperscript{th} cyclotomic polynomial
irreducible, for a prime $d$? It is known~\cite[Theorem 2.47]{lidl} that the
$d$\textsuperscript{th} cyclotomic polynomial $\Phi_d(x)$ is
irreducible over $\mathbb{F}_q$ if and only if $q$ is primitive mod
$d$.

We summarize the requirements on $A$, such that the discrete logarithm
problem is as secure as the discrete logarithm problem in
$\mathbb{F}_{q^{d-1}}$.
\begin{itemize}
\item[(iv)] The integer $d$ is prime.
\item[(v)] The representer polynomial $\phi_A(x)\mod (x-1)$ is
  either $0$ or $1$.
\item[(vi)] $q$ is primitive mod $d$.
\end{itemize}
\section{Why use the discrete logarithm problem with $d\times d$ circulant matrices over $\mathbb{F}_q$ instead of $\mathbb{F}_{q^d}$?} 
A quick answer to the above question is that multiplication in
$\mathcal{R}$, which is isomorphic as algebra to $d\times d$ circulant matrices over $\mathbb{F}_q$, can be much faster!

In implementing the exponentiation in any group, the best known method
is the famous \emph{square-and-multiply} algorithm. Using \emph{normal
  basis}~\cite[Definition 2.32]{lidl},
in a finite field of characteristic 2, squaring is cheap; it is just a cyclic shift
of the bits. In our case, using Theorem~\ref{theorem1}, it is not a
cyclic shift but a permutation. How about multiplication?

The details of the complexity of multiplication is bit
involved, but well studied. So we can skip the details here,  and
refer the reader to \cite{menezes_book,silverman}. The best case
complexity for multiplication in a finite field, using normal basis, is using an \emph{optimal normal
  basis}~\cite[Chapter 5]{menezes_book}. In that case, the complexity
of multiplication in the field
$\mathbb{F}_{2^d}$ is $2d-1$~\cite[Theorem 5.1]{menezes_book}. In the case
of $\mathcal{R}$, that complexity  reduces to
$d$~\cite[Example 3]{silverman}. In $\mathcal{R}$
we get security of $\mathbb{F}_{2^{d-1}}$. So there is an obvious
advantage of working with circulant matrices than with finite fields
-- the complexity of computing the exponentiation reduces to almost
half with only one extra bit. 

Lastly, one can use the extended Euclidean algorithm to compute the inverse of a
representer polynomial in $\mathcal{R}$. In an ElGamal like
cryptosystem, one needs to compute that inverse. This will make
decryption fast.  
\section{Conclusions} 
In this paper we study a discrete logarithm problem in the ring of
circulant matrices. If the matrices are of size $d$, then we saw that
under suitable conditions, 
the discrete logarithm problem is as secure as the discrete logarithm
problem in $\mathbb{F}_{q^{d-1}}$. Since multiplying circulant matrices
is easier, the discrete logarithm problem in
circulant matrix is obviously better than the discrete logarithm
problem in a finite field. 

There is not much history of looking at matrices for better
(more secure) discrete
logarithm problem. In this note the
isomorphism of the circulant matrices with the algebra $\mathcal{R}$
has reduced the central issue of this work to that of implementation of finite
fields. One way to look at $\mathcal{R}$, and this study of the
discrete logarithm problem in $\mathcal{R}$; the finite
field $\mathbb{F}_{q^{d-1}}$ is embedded in $\mathcal{R}$. Though this
is a valid way of looking at the present situation, it is not the
whole view. For example, the issue with row-sum won't be transparent,
unless one chooses to look at matrices. Also this opens up the
possibility that there can be other matrices, in which we can do much
better with the discrete logarithm problem.    
\nocite{*}
\bibliography{paper}
\bibliographystyle{amsplain}
\end{document}